\newtheorem{theorem}{Theorem}
\newtheorem{lemma}{Lemma}
\newtheorem{proposition}{Proposition}
\newtheorem{definition}{Definition}
\theoremstyle{definition}
\newtheorem{remark}{Remark}
\newtheorem{problem}{Problem}
\newcommand{\naturals}{\ensuremath{\mathbb{N}}}
\newcommand{\Reals}{\ensuremath{\mathbb{R}}}
\newcommand{\expectation}{\ensuremath{\mathbb{E}}}
\newcommand{\Var}{\mathrm{Var}}
\newcommand{\set}{\ensuremath{\mathcal}}
\begin{document}
\thispagestyle{empty}
\setcounter{page}{1}
\setlength{\baselineskip}{1.15\baselineskip}

\title{\huge{A Tight Lower Bound for the Hellinger Distance with Given Means and Variances}\\[0.2cm]}
\author{Tomohiro Nishiyama\\ Email: htam0ybboh@gmail.com}
%\thanks{e-mail: htam0ybboh@gmail.com}
\date{}
\maketitle
\thispagestyle{empty}

\begin{abstract}
The binary divergences that are divergences between probability measures defined on the same 2-point set have an interesting property. For the chi-squared divergence and the relative entropy, it is known that their binary divergence attain lower bounds with given means and variances, respectively. In this note, we show that the binary divergence of the squared Hellinger distance has the same property and propose an open problem that what conditions are needed for f-divergence to satisfy this property.
\end{abstract}
\noindent \textbf{Keywords:} Hellinger distance, chi-squared divergence, relative entropy, Kullback-Leibler divergence, Bhattacharyya coefficient.
 
\section{Introduction}
The Hellinger distance~\cite{hellinger1909neue} is a divergence measure which plays a key role in information theory, statistics and machine learning, and other fields in mathematics.
This belongs to an important class of divergence measures defined by means of convex functions $f$, and named $f$-divergences~\cite{csiszar1967information,csiszar1967topological,csiszar1972class}.
This class unifies other useful divergence measures such as the relative entropy (also known as the Kullback-Leibler divergence~\cite{kullback1951information}), and the chi-squared divergence~\cite{pearson1900x}. Regarding the chi-squared divergence, a tight lower bound with given means and variances is known as the Hammersley-Chapman-Robbins bound~\cite{chapman1951minimum}. Recently, we derive a tight lower bound under the same constraints for the relative entropy by using an integral relation between the relative entropy and the chi-squared divergence~\cite{nishiyama2020relations, nishiyama2019new}. Interestingly, both lower bounds are attained by their binary divergences that are divergences between probability measures on the same $2$-point set. On the other hand, the lower bound for the squared Hellinger distance given means and variances is studied in~\cite{dashti2013bayesian, katsoulakis2017scalable}.

In this note, we derive a tight lower bound for the squared Hellinger distance, and show its binary divergence attains the lower bound as in the case of the chi-squared divergence and the relative entropy. Furthermore, we propose an open problem that what conditions are needed for $f$-divergence to satisfy this property.

\section{Lower bound for the squared Hellinger distance}
\subsection{Preliminaries}

This subsection provides definitions of divergence measures which are used in this note.
\begin{definition} {\rm \label{def:fD} \cite[p.~4398]{liese2006divergences}}
Let $P$ and $Q$ be probability measures, let $\mu$ be a dominating measure
of $P$ and $Q$ (i.e., $P, Q \ll \mu$), and let $p := \frac{\mathrm{d}P}{\mathrm{d}\mu}$
and $q := \frac{\mathrm{d}Q}{\mathrm{d}\mu}$ be the densities of $P$ and $Q$ with respect
to $\mu$. The {\em $f$-divergence} from $P$ to $Q$ is given by
\begin{align} \label{eq:fD}
D_f(P\|Q) := \int q \, f \Bigl(\frac{p}{q}\Bigr) \, \mathrm{d}\mu,
\end{align}
where
\begin{align}
& f(0) := \underset{t \to 0^+}{\lim} \, f(t), \quad  0 f\biggl(\frac{0}{0}\biggr) := 0, \\[0.1cm]
& 0 f\biggl(\frac{a}{0}\biggr)
:= \lim_{t \to 0^+} \, t f\biggl(\frac{a}{t}\biggr)
= a \lim_{u \to \infty} \frac{f(u)}{u}, \quad a>0.
\end{align}
It should be noted that the right side of \eqref{eq:fD} is invariant
in the dominating measure $\mu$.
\end{definition}
\begin{definition} \label{def:Hellinger-distance}
The {\em squared Hellinger distance} is the $f$-divergence with $f(t) := \frac12 (\sqrt{t}-1)^2$ or $1-\sqrt{t}$ 
for $t >0$,
\begin{align}
\label{eq-Hellinger}
H^2(P, Q) &:= D_f(P\|Q) \\
&=\frac12 \int (\sqrt{p}-\sqrt{q})^2 \mathrm{d}\mu.
\end{align}
\end{definition}
The relative entropy and the chi-square divergence are the $f$-divergence with $f(t):=t\log t$ and $f(t):=(t-1)^2$, respectively.
\begin{definition} \label{def: probability_set}
Let us define a set of pairs of probability measures $(P,Q)$ defined on $n$-point set $\{u_1, u_2, \cdots, u_n\}$ by $\set{P}_n$, where $\{u_i\}_{1\leq i\leq n}$ are arbitrary real numbers.
\end{definition}
If $m< n$, $\set{P}_m$ is a subset of $\set{P}_n$.
\begin{definition} \label{def:binary Hellinger-distance}
The {\em binary squared Hellinger distance} is defined for $(P,Q)\in \set{P}_2$.
\begin{align}
\label{eq-binary Hellinger}
h^2(r,s) := \frac12 \Bigl((\sqrt{r}-\sqrt{s})^2+(\sqrt{1-r}-\sqrt{1-s})^2\Bigr),
\end{align}
where $P(u_1) = r$ and $Q(u_1) = s$.
\end{definition}
\begin{definition}
Let $P$ and $Q$ be probability measures defined on a measurable space $(\Reals, \mathscr{B})$, where $\Reals$ is the real line and $\mathscr{B}$ is the Borel $\sigma$-algebra of subsets of 
$\Reals$.
Let $\set{P}[m_P, \sigma_P, m_Q, \sigma_Q]$ be a set of pairs of probability measures $(P,Q)$ with given means and variances, i.e.,
\begin{align}
\label{constraints}
& \expectation[X] =: m_P, \; \expectation[Y] =: m_Q,
\quad \Var(X) =: \sigma_P^2, \;  \Var(Y) =: \sigma_Q^2,
\end{align}
where $X\sim P$ and $Y\sim Q$.
\end{definition}
%\begin{definition} \label{def:relative-entropy}
%The {\em relative entropy} is the
%$f$-divergence with $f(t) := t \log t$
%for $t >0$,
%\begin{align}
%\label{eq: 1st KL divergence}
%D(P\|Q) &:= D_f(P\|Q) \\
%\label{eq2: 1st KL divergence}
%&= \int p \, \log \frac{p}{q} \, \mathrm{d}\mu.
%\end{align}
%\end{definition}
%
%\begin{definition} \label{def:TV}
%The {\em total variation distance} between a pair of probability
%measures $P$ and $Q$ is the $f$-divergence from $P$ to $Q$ with
%$f(t) := |t-1|$ for all $t \geq 0$. It is a symmetric $f$-divergence,
%denoted by $|P-Q|$, which is given by
%\begin{align}
%\label{eq1: TV distance}
%|P-Q| &:=  D_f(P\|Q) \\
%\label{eq2: TV distance}
%&= \int |p-q| \, \mathrm{d}\mu.
%\end{align}
%\end{definition}
\subsection{Main results}
\begin{theorem} \label{theorem_LB_Hellinger}
Let $(P, Q)\in \set{P}[m_P, \sigma_P, m_Q, \sigma_Q]$.
\begin{enumerate}[a)]
\item
If $m_P\neq m_Q$, then
\begin{align} \label{eq_LB_Hellinger}
H^2(P,Q) \geq h^2(r,s)=1-\sqrt{1-\frac{a^2}{a^2+(\sigma_P+\sigma_Q)^2}}, 
\end{align}
where 
\begin{align}
\label{r}
& r := \frac12 + \frac{b+a^2}{4av} \in [0,1], \\
\label{s}
& s :=  \frac12 + \frac{b-a^2}{4av}\in [0,1], \\
\label{a}
& a:= m_P - m_Q, \\
\label{b}
& b:= \sigma_Q^2-\sigma_P^2, \\
\label{v}
& v:= \frac{1}{2|a|}\sqrt{b^2 + 2a^2(\sigma_P^2+\sigma_Q^2)+a^4}.
\end{align}

\item
The lower bound in the right side of \eqref{eq_LB_Hellinger} is attained for $(P,Q)\in \set{P}_2$ defined on $\{u_1, u_2\}$, and
\begin{align} \label{eq_binary_prob}
P(u_1) = r, \quad Q(u_1) = s,
\end{align}
with $r$ and $s$ in \eqref{r} and \eqref{s}, respectively, and
\begin{align} \label{vec_u_1,2}
& u_1 :=  m_P + \sqrt{\frac{(1-r) \sigma_P^2}{r}},
\quad u_2 := m_P - \sqrt{\frac{r \sigma_P^2}{1-r}}.
\end{align}

\item
If $m_P=m_Q$, then,
\begin{align} \label{mean_equal}
\inf_{(P,Q)\in\set{P}[m_P,\sigma_P, m_Q, \sigma_Q]} H^2(P,Q)=0.
\end{align}
%where the infimum of the left side of \eqref{mean_equal} is taken over all $P$ and $Q$ which satisfy \eqref{constraints}.
\end{enumerate}
\end{theorem}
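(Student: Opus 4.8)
The plan is to recast the whole problem in the Hilbert space $L^2(\mu)$ for a common dominating measure $\mu$. Writing $\phi := \sqrt{p}$ and $\psi := \sqrt{q}$, Definition~\ref{def:Hellinger-distance} gives $H^2(P,Q) = 1 - \rho$, where $\rho := \langle \phi,\psi\rangle = \int \sqrt{pq}\,\mathrm{d}\mu$ is the Bhattacharyya coefficient and $\|\phi\| = \|\psi\| = 1$, so $\rho \in [0,1]$. A short computation shows that \eqref{eq_LB_Hellinger} is equivalent to the single scalar inequality $\rho \le \frac{\sigma_P+\sigma_Q}{\sqrt{a^2+(\sigma_P+\sigma_Q)^2}}$, so part a) reduces to upper bounding $\rho$, and the equality with $h^2(r,s)$ is then an algebraic identity handled in part b).

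For part a) I would introduce the centred functions $\Phi := (x-m_P)\phi$ and $\Psi := (x-m_Q)\psi$. Finiteness of the variances gives $\Phi,\Psi \in L^2(\mu)$ with $\|\Phi\| = \sigma_P$ and $\|\Psi\| = \sigma_Q$, while the mean constraints give the orthogonality relations $\langle\Phi,\phi\rangle = \langle\Psi,\psi\rangle = 0$. The crux is then the identity $\langle\Phi,\psi\rangle - \langle\Psi,\phi\rangle = (m_Q-m_P)\rho = -a\rho$, which follows by writing the two inner products as $\int(x-m_P)\sqrt{pq}\,\mathrm{d}\mu$ and $\int(x-m_Q)\sqrt{pq}\,\mathrm{d}\mu$ and subtracting. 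Next, decomposing $\psi = \rho\phi + \psi^{\perp}$ with $\psi^{\perp}\perp\phi$ and $\|\psi^{\perp}\|^2 = 1-\rho^2$, orthogonality of $\Phi$ to $\phi$ gives $\langle\Phi,\psi\rangle = \langle\Phi,\psi^{\perp}\rangle$, whence $|\langle\Phi,\psi\rangle| \le \sigma_P\sqrt{1-\rho^2}$ by Cauchy--Schwarz; the symmetric argument gives $|\langle\Psi,\phi\rangle| \le \sigma_Q\sqrt{1-\rho^2}$. The triangle inequality then yields $|a|\rho \le (\sigma_P+\sigma_Q)\sqrt{1-\rho^2}$, and squaring and solving for $\rho$ produces exactly the claimed bound. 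I expect the main obstacle to be discovering this reformulation: once the centred functions and the subtraction identity are in hand, the rest is immediate.

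For part b) it suffices to check that the explicit two-point pair is admissible and saturates every inequality. The points in \eqref{vec_u_1,2} are the standard parametrisation of a two-point law with $P(u_1)=r$, mean $m_P$ and variance $\sigma_P^2$ (solving $r u_1 + (1-r)u_2 = m_P$ and $r(1-r)(u_1-u_2)^2 = \sigma_P^2$); imposing that $Q(u_1)=s$ on the same points yields mean $m_Q$ and variance $\sigma_Q^2$ produces two further equations whose solution is exactly \eqref{r}--\eqref{v}, so $(P,Q)\in\set{P}[m_P,\sigma_P,m_Q,\sigma_Q]$. Since $H^2(P,Q)=h^2(r,s)$ for a two-point law by Definition~\ref{def:binary Hellinger-distance}, it then remains to verify the algebraic identity that $h^2(r,s)$ with these $r,s$ equals the right side of \eqref{eq_LB_Hellinger}; equivalently, one checks that in this finite-dimensional situation the two Cauchy--Schwarz steps and the triangle inequality of part a) all hold with equality. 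This step is routine but its algebra is the most error-prone part of the argument.

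For part c), with $m_P=m_Q=:m$, I would exhibit a minimising sequence. Fix $\epsilon>0$ and a common core law $C$ of mean $m$ and variance $\sigma_C^2 < \min(\sigma_P^2,\sigma_Q^2)$, and set $P_\epsilon := (1-\epsilon)C + \epsilon A_P$ and $Q_\epsilon := (1-\epsilon)C + \epsilon A_Q$, where $A_P,A_Q$ are pairs of symmetric point masses at $m \pm L_P$ and $m \pm L_Q$ placed at disjoint, $C$-null locations; choosing $L_P,L_Q$ so that the total variances equal $\sigma_P^2$ and $\sigma_Q^2$ keeps both laws admissible with mean $m$. Mutual singularity of the three components forces $\int\sqrt{p_\epsilon q_\epsilon}\,\mathrm{d}\mu = 1-\epsilon$, hence $H^2(P_\epsilon,Q_\epsilon)=\epsilon\to 0$, which establishes \eqref{mean_equal}.
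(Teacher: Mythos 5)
Your proposal is correct, and part a) takes a genuinely different --- and considerably shorter --- route than the paper. The paper proves the bound by a constrained-optimization argument: a Lagrangian/KKT analysis with a polynomial root-counting step (its Lemma~3) shows that any interior minimizer over $n$-point measures must be supported on two points, and the general case is then reached by a compactness/limiting argument plus an approximation of arbitrary measures on $\Reals$ by finite discrete ones. Your Hilbert-space argument bypasses all of this: writing $H^2=1-\rho$ with $\rho=\langle\sqrt{p},\sqrt{q}\rangle$, the centred functions $\Phi=(x-m_P)\sqrt{p}$ and $\Psi=(x-m_Q)\sqrt{q}$ satisfy $\|\Phi\|=\sigma_P$, $\|\Psi\|=\sigma_Q$, $\Phi\perp\sqrt{p}$, $\Psi\perp\sqrt{q}$, and $\langle\Phi,\sqrt{q}\rangle-\langle\Psi,\sqrt{p}\rangle=-a\rho$, so two applications of Cauchy--Schwarz (after projecting out the $\sqrt{p}$-, resp.\ $\sqrt{q}$-, component) and the triangle inequality give $|a|\,\rho\le(\sigma_P+\sigma_Q)\sqrt{1-\rho^2}$, which is exactly \eqref{eq_LB_Hellinger}. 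This works directly for arbitrary $(P,Q)\in\set{P}[m_P,\sigma_P,m_Q,\sigma_Q]$ with no discretization, no compactness, and no boundary cases, and the equality conditions in the two Cauchy--Schwarz steps even indicate what an extremal pair must look like; what it costs is generality, since the argument is tied to the inner-product structure of $\sqrt{p}$ and, unlike the paper's reduce-to-two-point-support machinery, does not obviously transfer to the open problems for other $f$-divergences. For part b) you still owe the explicit algebra showing that the two-point pair meets the four moment constraints and that $h^2(r,s)$ equals the closed form; this is precisely the content of the paper's Lemmas~1 and~2, so nothing is missing in principle, but it must be carried out, and note that both your construction and \eqref{vec_u_1,2} implicitly assume $\sigma_P>0$ so that $r\in(0,1)$. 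Your part c) construction with three mutually singular components forcing $\rho=1-\epsilon$ is correct and cleaner than the paper's explicit quaternary sequence; just treat the degenerate case $\min(\sigma_P,\sigma_Q)=0$ separately, where one measure is forced to equal $\delta_m$ and the same conclusion is immediate.
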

\begin{proof}
See Subsection~\ref{subsection: proofs}.
\end{proof}
\begin{remark}
The Bhattacharyya coefficient~\cite{bhattacharyya1943measure} between $P$ and $Q$ is given by $\rho(P,Q):= \int \sqrt{pq} \mathrm{d}\mu$, then Theorem~\ref{theorem_LB_Hellinger} gives  the tight upper bound of $\rho(P,Q)$ for $P, Q\in \set{P}[m_P, \sigma_P, m_Q, \sigma_Q]$.
\end{remark}
We compare the binary squared Hellinger distance $h^2(r,s)$ and the lower bound shown in~\cite{katsoulakis2017scalable}, which is given by
\begin{align}
\label{eq_hel_lB1}
l(m_P,\sigma_P, m_Q, \sigma_Q) := \frac{a^2}{2\Bigl(a^2+2(\sigma_P^2 + \sigma_Q^2)\Bigr) },
\end{align} 
where $a$ is given by \eqref{a}. See Lemma A.1 of~\cite{katsoulakis2017scalable} for details on deriving \eqref{eq_hel_lB1}.
\begin{remark}
Note that the definition of the squared Hellinger distance differs from the definition in~\cite{katsoulakis2017scalable} by a factor $2$. 
\end{remark} 
\begin{proposition}  \label{proposition: hellinger_LB}
\label{prop_hellinger}
Let $g(x) := \frac{(1+x)}{(1+\sqrt{x})^2}$ for $x\in[0, \infty)$.
Then, 
\begin{align} 
\label{eq_compare}
\beta_{\mathrm{min}} l(m_P,\sigma_P, m_Q, \sigma_Q) \leq h^2(r,s) \leq \beta_{\mathrm{max}} l(m_P,\sigma_P, m_Q, \sigma_Q),
\end{align}
where $\beta_{\mathrm{min}}$ and $\beta_{\mathrm{max}}$ are given by
\begin{align}
\beta_{\mathrm{max}} := 2\max\Bigl(g\Bigl(\frac{s}{r}\Bigr), g\Bigl(\frac{1-s}{1-r}\Bigr)\Bigr) \in [1, 2], \\
\beta_{\mathrm{min}} := 2\min\Bigl(g\Bigl(\frac{s}{r}\Bigr), g\Bigl(\frac{1-s}{1-r}\Bigr)\Bigr) \in [1, 2].
\end{align}
\end{proposition}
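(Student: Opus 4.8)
The plan is to rewrite both $h^2(r,s)$ and $l$ as expressions involving the two positive weights $\frac{1}{r+s}$ and $\frac{1}{2-r-s}$, so that the claimed inequality reduces to the elementary fact that a weighted average lies between its extreme values. The starting point is the rationalization identity $(\sqrt{\alpha}-\sqrt{\beta})^2 = \frac{(\alpha-\beta)^2}{(\sqrt{\alpha}+\sqrt{\beta})^2}$, applied to both terms in \eqref{eq-binary Hellinger}. Combined with the observations $g\Bigl(\frac{s}{r}\Bigr) = \frac{r+s}{(\sqrt{r}+\sqrt{s})^2}$ and $g\Bigl(\frac{1-s}{1-r}\Bigr) = \frac{2-r-s}{(\sqrt{1-r}+\sqrt{1-s})^2}$, which follow by clearing the factor $r$, respectively $1-r$, inside the definition of $g$, this yields the representation
\[
h^2(r,s) = \frac{(r-s)^2}{2}\Bigl( \frac{g_1}{r+s} + \frac{g_2}{2-r-s}\Bigr), \qquad g_1 := g\Bigl(\tfrac{s}{r}\Bigr),\; g_2 := g\Bigl(\tfrac{1-s}{1-r}\Bigr).
\]

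Next I would establish the companion identity
\[
2\, l(m_P,\sigma_P,m_Q,\sigma_Q) = \frac{(r-s)^2}{2}\Bigl( \frac{1}{r+s} + \frac{1}{2-r-s}\Bigr),
\]
which puts $l$ into exactly the same shape as $h^2(r,s)$ but with $g_1$ and $g_2$ both replaced by $1$. This is the one genuinely computational step, and I expect it to be the main obstacle. Substituting the definitions \eqref{r}--\eqref{v} gives $(r-s)^2 = \frac{a^4}{D}$ and $(r+s)(2-r-s) = 1 - \frac{b^2}{D}$ with $D := b^2 + 2a^2(\sigma_P^2+\sigma_Q^2)+a^4 = 4a^2v^2$; then $\frac{1}{r+s}+\frac{1}{2-r-s} = \frac{2}{(r+s)(2-r-s)}$ forces a cancellation of $D$ and collapses the product to $\frac{a^2}{a^2 + 2(\sigma_P^2+\sigma_Q^2)}$, which is $2l$ by \eqref{eq_hel_lB1}. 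The bookkeeping with $a$, $b$, $v$ is routine but must be carried out carefully to confirm that every factor of $D$ cancels.

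With both identities in hand, I would divide to obtain
\[
\frac{h^2(r,s)}{2\, l} = \frac{ \frac{g_1}{r+s} + \frac{g_2}{2-r-s} }{ \frac{1}{r+s} + \frac{1}{2-r-s} },
\]
which is a convex combination of $g_1$ and $g_2$ with the strictly positive weights $\frac{1}{r+s}$ and $\frac{1}{2-r-s}$; positivity holds because $r,s\in[0,1]$ together with $r\neq s$ (guaranteed by $a\neq 0$) forces $r+s, 2-r-s \in (0,2)$. A weighted average of two numbers lies between their minimum and maximum, so $\min(g_1,g_2)\le \frac{h^2(r,s)}{2l} \le \max(g_1,g_2)$, and multiplying through by $2l$ yields \eqref{eq_compare}.

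Finally, to verify $\beta_{\mathrm{min}},\beta_{\mathrm{max}}\in[1,2]$ I would analyze $g$ directly: with $t=\sqrt{x}$ one has $g = \frac{1+t^2}{(1+t)^2}$ and $g'(t) = \frac{2(t-1)}{(1+t)^3}$, which vanishes only at $t=1$ where $g=\tfrac12$, while $g\to 1$ as $t\to 0^+$ and as $t\to\infty$. Hence $g(x)\in[\tfrac12,1]$ for all $x\ge 0$, so $2g_1,2g_2\in[1,2]$, which is exactly the asserted range for $\beta_{\mathrm{min}}=2\min(g_1,g_2)$ and $\beta_{\mathrm{max}}=2\max(g_1,g_2)$.
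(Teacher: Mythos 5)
Your proposal is correct and follows essentially the same route as the paper: the same rationalization of $h^2(r,s)$ into the two terms weighted by $g(s/r)$ and $g(\tfrac{1-s}{1-r})$, the same algebraic identity collapsing $\frac{(r-s)^2}{2}\bigl(\frac{1}{r+s}+\frac{1}{2-r-s}\bigr)$ to $2l$ via \eqref{2020c4}--\eqref{2020c5}, and the same elementary bounding of $g_1,g_2$ by their min and max (your convex-combination phrasing is just a repackaging of the paper's step from \eqref{2020c2} to \eqref{2020c3}). Your explicit verification that $g(x)\in[\tfrac12,1]$, hence $\beta_{\mathrm{min}},\beta_{\mathrm{max}}\in[1,2]$, is a small addition the paper leaves unproved.
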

\begin{proof}
See Subsection~\ref{subsection: proofs}.
\end{proof}
From Proposition~\ref{proposition: hellinger_LB}, it can be seen that the difference between the two lower bounds increases when $|r-s|$ is large.
\subsection{Examples}

We compare numerically the lower bounds on the squared Hellinger distance, as it given in \eqref{eq_LB_Hellinger} and \eqref{eq_hel_lB1}, with the squared Hellinger distance for the Gaussian distribution and the exponential distribution:
%We provide numerical experimentation of the lower bound on the squared Hellinger distance, and comparing the binary squared Hellinger distance and the lower bound derived in \cite{}
%with the following closed-form expressions for the squared Hellinger distance:
\begin{enumerate}[a)]
\item
The squared Hellinger distance between real-valued Gaussian distribution is given by
\begin{align}
\label{eq_Gaussian}
H^2\Bigl(\mathcal{N}(m_P, \sigma_P^2), \mathcal{N}(m_Q, \sigma_Q^2)\Bigr)=1-\sqrt{\frac{2\sigma_P\sigma_Q}{\sigma_P^2+\sigma_Q^2}}\exp\Bigl(-\frac{(m_P-m_Q)^2}{4(\sigma_P^2+\sigma_Q^2)}\Bigr).
\end{align}
\item
Let $E_\mu$ denote a random variable which is exponentially distributed with mean $\mu> 0$; its probability density function is given by
\begin{align}
e_\mu(x)=\frac{1}{\mu} \exp\Bigl(-\frac{x}{\mu}\Bigr)\{x\geq 0\}.
\end{align}
Then, for $a_1, a_2 > 0$ and $d_1, d_2\in\Reals$, 
\begin{align}
\label{eq_exponential}
H^2(E_{a_1}+d_1, E_{a_2}+d_2)=
\begin{dcases}
1-\frac{2\sqrt{a_1a_2}}{a_1+a_2}\exp\Bigl({-\frac{(d_1-d_2)}{2a_2}}\Bigr), & \quad d_1\geq d_2, \\
1-\frac{2\sqrt{a_1a_2}}{a_1+a_2}\exp\Bigl({-\frac{(d_2-d_1)}{2a_1}}\Bigr), & \quad d_1< d_2.
\end{dcases}
\end{align}
In this case, the means under $P$ and $Q$ are $m_P=d_1+a_1$ and $m_Q=d_2+a_2$, respectively, and the variances are $\sigma_P^2=a_1$ and $\sigma_Q^2=a_2$.
Hence, for obtaining the required means and variances, set
\begin{align}
a_1=\sigma_P, \quad a_2=\sigma_Q, \quad d_1=m_P-\sigma_P, \quad d_2=m_Q-\sigma_Q.
\end{align}
\end{enumerate}
\begin{enumerate}[1)]
\item
If $(m_P, \sigma_P^2, m_Q, \sigma_Q^2) =(10, 100, 3, 9)$, then the two lower bounds in \eqref{eq_LB_Hellinger} and \eqref{eq_hel_lB1} are equal to $0.120$ and $0.092$, respectively.
Additionally, the two squared Hellinger distance in \eqref{eq_Gaussian} and \eqref{eq_exponential} are equal to $0.337$ and $0.157$, respectively.
\item
If $(m_P, \sigma_P^2, m_Q, \sigma_Q^2) =(20, 30, 10, 20)$, then the two lower bounds in \eqref{eq_LB_Hellinger} and \eqref{eq_hel_lB1} are equal to $0.295$ and $0.250$, respectively.
Additionally, the two squared Hellinger distance in \eqref{eq_Gaussian} and \eqref{eq_exponential} are equal to $0.400$ and $0.636$, respectively.
\end{enumerate}
\subsection{Proofs}
\label{subsection: proofs}
We first show the following three lemmas.
\begin{lemma}
\label{lem_p2}
A set $\set{P}_2 \cap \set{P}[m_P, \sigma_P, m_Q, \sigma_Q]$ has one component $(P,Q)$ that is given by \eqref{eq_binary_prob}, and \eqref{vec_u_1,2}.
\end{lemma}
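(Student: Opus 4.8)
The plan is to translate membership in $\set{P}_2 \cap \set{P}[m_P, \sigma_P, m_Q, \sigma_Q]$ into an explicit moment system and solve it. A pair $(P,Q)\in\set{P}_2$ is determined by the two support points, which I take to be $u_1 > u_2$, together with $r := P(u_1)$ and $s := Q(u_1)$, so there are four real unknowns. Writing $\delta := u_1 - u_2 > 0$ and using the elementary fact that a two-point law on $\{u_1,u_2\}$ has mean $u_2 + (\text{mass at }u_1)\,\delta$ and variance $(\text{mass at }u_1)(\text{mass at }u_2)\,\delta^2$, I would first recast the four constraints in \eqref{constraints} as $r = (m_P-u_2)/\delta$, $\sigma_P^2 = r(1-r)\delta^2$, $s = (m_Q-u_2)/\delta$, and $\sigma_Q^2 = s(1-s)\delta^2$.

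Eliminating $u_2$ between the two mean equations gives $(r-s)\delta = a$, while subtracting the two variance equations and using $\sigma_P^2-\sigma_Q^2 = (r-s)(1-r-s)\delta^2$ together with the previous relation gives $(1-r-s)\delta = -b/a$. Adding and subtracting these two linear relations isolates $(2r-1)\delta = (a^2+b)/a$ and $(1-2s)\delta = (a^2-b)/a$, which already express $r$ and $s$ in the forms \eqref{r}, \eqref{s} once $\delta$ is identified with $2v$.

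To pin down $\delta$, I would rewrite each variance constraint via $r(1-r)=\tfrac14[1-(2r-1)^2]$, so that $4\sigma_P^2 = \delta^2 - [(2r-1)\delta]^2$ and $4\sigma_Q^2 = \delta^2 - [(1-2s)\delta]^2$. Substituting the expressions just found for $(2r-1)\delta$ and $(1-2s)\delta$ and adding collapses everything to a single equation, $\delta^2 = 2(\sigma_P^2+\sigma_Q^2) + (a^4+b^2)/a^2$, whose positive root is exactly $\delta = 2v$ by \eqref{v}. Back-substitution recovers \eqref{r}, \eqref{s}, and the identity $\sigma_P = \sqrt{r(1-r)}\,\delta$ turns $u_2 = m_P - r\delta$ and $u_1 = m_P + (1-r)\delta$ into the stated forms in \eqref{vec_u_1,2}; this gives existence and the explicit description.

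The delicate point is uniqueness together with sign management. The equation for $\delta^2$ admits the two roots $\pm 2v$, but the flip $\delta\mapsto-\delta$ corresponds precisely to interchanging the labels $u_1\leftrightarrow u_2$ (hence $r\mapsto 1-r$, $s\mapsto 1-s$), which is the \emph{same} pair of measures; fixing $\delta>0$ (i.e. $u_1>u_2$) therefore yields a single component. I would also verify that this root is admissible by checking $|(2r-1)\delta|\le\delta$ and $|(1-2s)\delta|\le\delta$, which reduce respectively to $b\le\sigma_P^2+\sigma_Q^2$ and $-b\le\sigma_P^2+\sigma_Q^2$ and are immediate from $b=\sigma_Q^2-\sigma_P^2$, so that $r,s\in[0,1]$. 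The main obstacle is thus not conceptual but bookkeeping: several intermediate quantities carry a factor $1/a$, so the sign of $a$ and the chosen root of $\delta$ must be kept consistent throughout (and the argument implicitly uses $a\neq0$, the case $m_P=m_Q$ being handled separately in part c) of Theorem~\ref{theorem_LB_Hellinger}).
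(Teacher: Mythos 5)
Your proposal is correct and follows essentially the same route as the paper: both solve the four moment equations by elimination to arrive at \eqref{r}, \eqref{s}, \eqref{vec_u_1,2}, and both dispose of the second root of the resulting quadratic by observing that it corresponds to relabeling $u_1\leftrightarrow u_2$ (i.e.\ $r\mapsto 1-r$, $s\mapsto 1-s$) and hence the same pair of measures. Your parametrization by $\delta=u_1-u_2$ organizes the sign bookkeeping a bit more cleanly than the paper's $\pm$ square-root expressions, and your explicit check that $r,s\in[0,1]$ is a small bonus, but the substance is identical.
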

\begin{proof}
We outline the calculation, and see the proof of Theorem 2 in~\cite{nishiyama2020relations} for detailed calculation. Let $(P,Q)$ be defined on $\{u'_1, u'_2\}$, and let $P(u'_1) = r'$ and $Q(u'_1) = s'$. From the mean and variance constraints for $P$, it follows that
\begin{align}
\label{eq_a1}
& u'_1 :=  m_P \pm \sqrt{\frac{(1-r') \sigma_P^2}{r'}},
\quad u'_2 = m_P \mp \sqrt{\frac{r' \sigma_P^2}{1-r'}}.
\end{align}
By subtracting the mean constraint for $P$ from that for $Q$, we have
\begin{align}
\label{eq_a2}
(s'-r')(u'_1-u'_2)=-a.
\end{align}
By subtracting the variance constraint for $P$ from that for $Q$, substituting \eqref{eq_a2} into the result of subtraction, and rearranging terms, we have  
\begin{align}
\label{eq_a3}
u'_1+u'_2=2m_P-a-\frac{b}{a}.
\end{align}
Substituting \eqref{eq_a1} into \eqref{eq_a3}, it follows that
\begin{align}
\label{eq_a4}
\pm \frac{\sigma_P(1-2r')}{\sqrt{r'(1-r')}}=-\frac{b+a^2}{a}.
\end{align}
Squaring both sides and rearranging terms, we have
\begin{align}
\label{eq_a5}
\sqrt{r'(1-r')}=\frac{\sigma_P}{2v}.
\end{align}
Substituting \eqref{eq_a5} into \eqref{eq_a4}, we obtain
\begin{align}
\label{eq_a6}
r'=\frac12 \pm\frac{b+a^2}{4av}.
\end{align}
Substituting \eqref{eq_a1}, \eqref{eq_a5}, and \eqref{eq_a6} into \eqref{eq_a2}, we finally obtain
\begin{align}
\label{eq_a7}
s'=r'-\frac{a}{2v}=\frac12 \pm\frac{b-a^2}{4av}.
\end{align}
One solution of \eqref{eq_a1}, \eqref{eq_a6}, and \eqref{eq_a7} gives \eqref{r}, \eqref{s}, and \eqref{vec_u_1,2}, and another solution gives the same probability measure by replacing $(r', s', u'_1, u'_2)\rightarrow (1-r, 1-s, u_2, u_1)$. 
\end{proof}

\begin{lemma}
\label{lem_binary}
If $(P,Q) \in \set{P}_2 \cap \set{P}[m_P, \sigma_P, m_Q, \sigma_Q]$, 
\begin{align}
\label{eq_binary}
H^2(P,Q)=h^2(r,s)=1-\sqrt{1-\frac{a^2}{a^2+(\sigma_P+\sigma_Q)^2}}.
\end{align}
\end{lemma}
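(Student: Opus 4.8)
The first equality in \eqref{eq_binary} is essentially a tautology: since $(P,Q)$ is supported on the two-point set $\{u_1,u_2\}$ with $P(u_1)=r$ and $Q(u_1)=s$, evaluating the integral in \eqref{eq-Hellinger} over the two atoms reproduces exactly the right-hand side of \eqref{eq-binary Hellinger}, so that $H^2(P,Q)=h^2(r,s)$. The real content is the second equality, which I would establish by a direct algebraic evaluation of $h^2(r,s)$ using the closed-form expressions for $r$, $s$, and $v$ supplied by Lemma~\ref{lem_p2}.

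First I would rewrite the binary distance in terms of the Bhattacharyya coefficient. Expanding the squares in \eqref{eq-binary Hellinger} and using $r+(1-r)+s+(1-s)=2$ gives $h^2(r,s)=1-\rho$, where $\rho:=\sqrt{rs}+\sqrt{(1-r)(1-s)}$. It therefore suffices to show $\rho^2=1-\frac{a^2}{a^2+(\sigma_P+\sigma_Q)^2}$, after which nonnegativity of $\rho$ yields the claim. Squaring gives $\rho^2=\bigl(rs+(1-r)(1-s)\bigr)+2\sqrt{rs(1-r)(1-s)}$, and I would evaluate the two pieces separately.

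The key ingredients are the two symmetric identities $r(1-r)=\frac{\sigma_P^2}{4v^2}$ and $s(1-s)=\frac{\sigma_Q^2}{4v^2}$. The first is exactly \eqref{eq_a5} from the proof of Lemma~\ref{lem_p2}; the second follows in the same way from the variance constraint for $Q$, or more directly from the observation that the gap equals $u_1-u_2=2v$ (combine \eqref{eq_a2} and \eqref{eq_a7}), so that $\sigma_P^2=r(1-r)(u_1-u_2)^2$ and $\sigma_Q^2=s(1-s)(u_1-u_2)^2$ deliver both relations at once. These immediately give the cross term $2\sqrt{rs(1-r)(1-s)}=\frac{\sigma_P\sigma_Q}{2v^2}$, while writing $r=\tfrac12+\tfrac{b+a^2}{4av}$ and $s=\tfrac12+\tfrac{b-a^2}{4av}$ and multiplying out yields $rs+(1-r)(1-s)=\tfrac12+\tfrac{b^2-a^4}{8a^2v^2}$.

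Substituting $4a^2v^2=b^2+2a^2(\sigma_P^2+\sigma_Q^2)+a^4=:D$ from \eqref{v} and combining the two pieces, the numerator collapses to $2b^2+2a^2(\sigma_P+\sigma_Q)^2$, so that $\rho^2=\frac{b^2+a^2(\sigma_P+\sigma_Q)^2}{D}$. The step I expect to be the main obstacle, and the one worth doing carefully, is recognizing the factorization $D=\bigl(a^2+(\sigma_P+\sigma_Q)^2\bigr)\bigl(a^2+(\sigma_P-\sigma_Q)^2\bigr)$, which together with $b^2=(\sigma_Q-\sigma_P)^2(\sigma_Q+\sigma_P)^2$ (from \eqref{b}) lets $(\sigma_P+\sigma_Q)^2$ cancel and leaves $\rho^2=\frac{(\sigma_P+\sigma_Q)^2}{a^2+(\sigma_P+\sigma_Q)^2}$, as required. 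Throughout, the constraints $r,s\in[0,1]$ recorded in part~a) of Theorem~\ref{theorem_LB_Hellinger} guarantee that every quantity under a square root is nonnegative, so no sign ambiguity arises.
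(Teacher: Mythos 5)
Your proposal is correct and follows essentially the same route as the paper: reduce $h^2(r,s)$ to $1-\rho$ with $\rho=\sqrt{rs}+\sqrt{(1-r)(1-s)}$, square, use the identities $r(1-r)=\sigma_P^2/(4v^2)$ and $s(1-s)=\sigma_Q^2/(4v^2)$ together with the product $(r-\tfrac12)(s-\tfrac12)$, and finish with the factorization $4a^2v^2=\bigl(a^2+(\sigma_P+\sigma_Q)^2\bigr)\bigl(a^2+(\sigma_P-\sigma_Q)^2\bigr)$. (Incidentally, your statement of the first product identity silently corrects a typo in the paper's display, where $\sqrt{r(1-r)}=\sigma_P/(2v)$ is misprinted as $\sqrt{s(1-s)}=\sigma_P/(2v)$.)
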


\begin{proof}
From Lemma \ref{lem_p2}, $(P,Q)\in\set{P}_2 \cap \set{P}[m_P, \sigma_P, m_Q, \sigma_Q]$ is unique.
From\eqref{r} and \eqref{s}, we have
\begin{align}
\label{eq_b1}
\sqrt{s(1-s)}=\frac{\sigma_P}{2v}, \quad \sqrt{s(1-s)}=\frac{\sigma_Q}{2v}, \\[0.1cm]
\label{eq_b2}
\Bigl(r-\frac{1}{2}\Bigr)\Bigl(s-\frac{1}{2}\Bigr)=\frac{b^2-a^4}{16a^2v^2}.
\end{align}
By combining these relations and
\begin{align}
\label{eq_b3}
(\sqrt{rs}+\sqrt{(1-r)(1-s)})^2&=2\Bigl(r-\frac{1}{2}\Bigr)\Bigl(s-\frac{1}{2}\Bigr)+2\sqrt{r(1-r)}\sqrt{s(1-s)}+\frac{1}{2},
\end{align}
it follows that
\begin{align}
\label{eq_b4}
&(\sqrt{rs}+\sqrt{(1-r)(1-s)})^2=\frac{b^2+2a^2\sigma_P\sigma_Q-a^4}{8a^2v^2}+\frac{1}{2}\\[0.1cm]
\label{eq_b5}
&=\frac{b^2+2a^2(\sigma_P^2+\sigma_Q^2)+a^4}{8a^2v^2} - a^2\frac{(\sigma_P-\sigma_Q)^2+a^2}{4a^2v^2} +\frac{1}{2} \\[0.1cm]
\label{eq_b6}
&=1-\frac{a^2}{a^2+(\sigma_P+\sigma_Q)^2},
\end{align}
where we use \eqref{v} and $4a^2v^2=\Bigl(a^2+(\sigma_P+\sigma_Q)^2\Bigr)\Bigl(a^2+(\sigma_P-\sigma_Q)^2\Bigr)$.
By substituting \eqref{eq_b6} into $h^2(r,s)=\frac12 \Bigl((\sqrt{r}-\sqrt{s})^2+(\sqrt{1-r}-\sqrt{1-s})^2\Bigr)=1-(\sqrt{rs}+\sqrt{(1-r)(1-s)})$, we obtain \eqref{eq_binary}.
\end{proof}

\begin{lemma}
\label{lem_minimum}
For $R>0$, let $\set{P}_{n, R}\subseteq \set{P}_{n}$ be a set of pairs of probability measures  that satisfy $|u_i|\leq R$ for $i=1,2,\cdots, n$.
Let $(P,Q)\in \set{P}_{n, R}\cap \set{P}[m_P, \sigma_P, m_Q, \sigma_Q]$.

If $m_P\neq m_Q$, the global minimum points $(P^*, Q^*, \bm{u}^*)=\mathrm{argmin}_{(P,Q)\in\set{P}_{n,R}\cap \set{P}[m_P, \sigma_P, m_Q, \sigma_Q]} H^2(P,Q)$ satisfy any of the following conditions.
\begin{enumerate}[1)]
\item
$(P^*,Q^*)\in\set{P}_2$ and $\max_i |u^*_i| < R$.
\item
$\max_i |u^*_i| = R$.
\end{enumerate}
\end{lemma}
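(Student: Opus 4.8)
The plan is to recast the problem as a constrained maximization and to apply first-order (Karush--Kuhn--Tucker) optimality conditions. Writing $p_i := P(u_i)$ and $q_i := Q(u_i)$, one has $H^2(P,Q) = 1 - \sum_i \sqrt{p_i q_i}$, so minimizing $H^2$ over the feasible set is the same as maximizing the Bhattacharyya coefficient $\rho := \sum_i \sqrt{p_i q_i}$ subject to the two normalizations, the mean constraints $\sum_i p_i u_i = m_P$ and $\sum_i q_i u_i = m_Q$, and the second-moment constraints $\sum_i p_i u_i^2 = \sigma_P^2 + m_P^2$ and $\sum_i q_i u_i^2 = \sigma_Q^2 + m_Q^2$. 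Since $\set{P}_{n,R}\cap\set{P}[m_P,\sigma_P,m_Q,\sigma_Q]$ is a closed subset of the compact set $\{|u_i|\le R\}\times(\text{simplex})^2$ and $H^2$ is continuous, a global minimizer exists whenever the feasible set is nonempty. If some $|u_i^*|=R$ then condition 2) holds and nothing more is needed; hence the entire content is the interior case $\max_i|u_i^*|<R$, where I must show $(P^*,Q^*)\in\set{P}_2$.

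In the interior case I would form the Lagrangian for $\rho$ with multipliers for the six constraints, collecting the $P$-multipliers into a quadratic $\phi(u):=\alpha_0+\alpha_1 u+\alpha_2 u^2$ and the $Q$-multipliers into $\psi(u):=\beta_0+\beta_1 u+\beta_2 u^2$. At a support point with $p_i,q_i>0$ the stationarity in $p_i$ and $q_i$ gives $\tfrac12\sqrt{q_i/p_i}=\phi(u_i)$ and $\tfrac12\sqrt{p_i/q_i}=\psi(u_i)$; multiplying these yields $\phi(u_i)\psi(u_i)=\tfrac14$, while dividing yields $p_i\phi(u_i)=q_i\psi(u_i)$. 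Because $u_i$ is interior, stationarity in $u_i$ gives $p_i\phi'(u_i)+q_i\psi'(u_i)=0$. Setting $G:=\phi\psi$ (a polynomial of degree at most $4$), the three relations combine as $p_i\,G'(u_i)=\psi'(u_i)\bigl(p_i\phi(u_i)-q_i\psi(u_i)\bigr)=0$, so $G'(u_i)=0$ and every such $u_i$ is a \emph{double} root of $G-\tfrac14$. Since $m_P\neq m_Q$ forbids $P=Q$, the function $\phi$ cannot be constant, so $\phi\psi$ is nonconstant and $G-\tfrac14\not\equiv0$ has at most four roots counted with multiplicity; therefore at most two distinct points carry both $p_i>0$ and $q_i>0$.

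It remains to rule out one-sided points, i.e. indices with $p_i>0=q_i$ or the symmetric case. Here I would exploit that $(p,q)\mapsto\sqrt{pq}$ has an infinite partial derivative as $q\to0^+$: assigning a mass $\varepsilon$ of $Q$ to such a point increases $\rho$ by $\sqrt{p_i}\,\sqrt{\varepsilon}+O(\varepsilon)$, whereas the compensating adjustment of $Q$ needed to restore its mean and variance changes $\rho$ only by $O(\varepsilon)$; for small $\varepsilon$ this strictly increases $\rho$, contradicting minimality. Since the two-point bound already holds for the common support, excluding one-sided points forces the support of the pair to consist of at most two points, so $(P^*,Q^*)\in\set{P}_2$, which is condition 1).

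The step I expect to be the real obstacle is making the one-sided elimination rigorous, because the compensating perturbation must keep \emph{both} $P$ and $Q$ feasible and may require moving the (interior) support locations in addition to reallocating mass; when $P$ or $Q$ is carried by very few points, the mean/variance constraints couple the two measures and a naive redistribution can be infeasible. I would resolve this by a dimension count: near an interior minimizer the six constraints cut out a smooth manifold of sufficient dimension once the one-sided point is adjoined as a new location, so a feasible direction increasing the missing mass exists unless the constraint Jacobian is rank-deficient, and the (non-generic) rank-deficient configurations would be treated separately.
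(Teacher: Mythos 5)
Your treatment of the interior, common-support points is essentially the paper's argument in a different parametrization: Lagrange multipliers collected into quadratics $\phi,\psi$, the relations $\phi(u_i)\psi(u_i)=\mathrm{const}$ and $(\phi\psi)'(u_i)=0$ forcing each common support point to be a double root of a polynomial of degree at most $4$, hence at most two such points, with the degenerate case $\phi\psi\equiv\mathrm{const}$ excluded via $m_P\neq m_Q$ (note the relevant dichotomy is whether $\phi\psi-\tfrac14$ vanishes identically, not whether $\phi$ is constant; a nonconstant polynomial $\phi$ cannot have a polynomial reciprocal, so your claim repairs easily). The genuine gap is exactly where you flagged it: eliminating one-sided points with $p_i>0=q_i$ or vice versa. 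The $\sqrt{\varepsilon}$-gain perturbation is the right intuition, but the compensating redistribution must preserve three moment constraints for $Q$ while keeping $Q\geq 0$ and leaving $P$ feasible; when $Q$ is carried by few points this forces you to move support locations, which re-enters $P$'s constraints, and your proposed dimension count with a separate treatment of rank-deficient configurations is a placeholder rather than an argument.

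The paper avoids this issue entirely by the substitution $p_i=w_i^2$, $q_i=z_i^2$ and minimizing $-\sum_i w_i z_i$: the objective is then smooth up to and including $w_i=0$ or $z_i=0$, and the stationarity equations $z_i^*=w_i^*\phi(u_i^*)$ and $w_i^*=z_i^*\psi(u_i^*)$ immediately give $w_i^*=0\Leftrightarrow z_i^*=0$, so one-sided points cannot occur at an interior stationary point. That single reparametrization is the missing idea, and it is what makes the polynomial argument apply to the full support rather than only to the common support. Two smaller omissions in your proposal: you do not verify a constraint qualification justifying the existence of multipliers (the paper notes that the six constraint gradients are linearly independent because the $u_i$ are distinct and $I\geq 3$), and you do not treat the degenerate cases $\sigma_P=0$ or $\sigma_Q=0$, where one measure collapses to a point mass and the paper runs a separate, shorter Lagrangian argument with a single quadratic $\phi$ having a double root.
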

Note that $H^2(P,Q)$ depends on $\{u_i\}$.

\begin{proof}
We consider the case of $n\geq 3$ ($n=1,2$ are trivial), and let $p_i:= P(u_i)$, $q_i:= Q(u_i)$, and $p_i=w_i^2$, $q_i=z_i^2$. Consider the following minimization problem.
\begin{align}
\label{eq_min_problem}
&\mbox{minimize} \quad - \sum_i w_i z_i, \\[0.1cm]
\label{eq_constraint1}
\quad \mbox{subject to}\quad &g_k(\bm{w}, \bm{u}):= \sum_i w_i^2 u_i^{k-1} -A_k=0, \\[0.1cm] 
\label{eq_constraint2}
& g_{k+3}(\bm{z}, \bm{u}):= \sum_i z_i^2 u_i^{k-1} -B_k=0, \quad \mbox{for} \hspace*{0.15cm}  k=1,2,3,\\[0.1cm]
\label{eq_constraint3}
&|w_i| \leq 1, \quad |z_i| \leq 1, \quad |u_i|\leq R, \quad \mbox{for} \hspace*{0.15cm}  1\leq i\leq n,
\end{align}
where $\bm{A}:= (1, m_P, \sigma_P^2 + m_P^2)^\mathrm{T}$, $\bm{B}:= (1, m_Q, \sigma_Q^2 + m_Q^2)^\mathrm{T}$, and \eqref{eq_constraint1} and \eqref{eq_constraint2} correspond to \eqref{constraints}.
Since the feasible set is compact, there exists a global minimum. If $(\bm{w}^*, \bm{z}^*, \bm{u}^*)$ is the global minimum point, the point $\Bigl((|w^*_1|, |w^*_2|, \cdots, |w^*_n|), (|z^*_1|, |z^*_2|, \cdots, |z^*_n|), \bm{u}^*\Bigr)$ is also the global minimum point because $-\sum_i w^*_i z^*_i\geq -\sum_i |w^*_i| |z^*_i|$.
Hence, the global minimum of the problem \eqref{eq_min_problem}-\eqref{eq_constraint3} is equal to the global minimum of the original problem, and global minimum points in the area of non-negative $\{w_i\}$ and $\{z_i\}$ are the same as original ones. The advantage of replacing the problem is that we do not need to consider about the boundary at $p_i=0$ or $q_i=0$.

We first consider the case of $\sigma_P > 0$ and $\sigma_Q > 0$, then $|w_i| ,|z_i|< 1$ for $1\leq i\leq n$.
Hence, the global minimum points must be stationary points, or be on the boundary at $\max_i |u^*_i| = R$.
By rearranging the order of $\{w_i\}$ appropriately, we can write $w^*_i > 0$ for $1 \leq i \leq I$ and $w^*_i = 0$ for $I+1 \leq i \leq n$.
The Lagrangian for the minimization problem \eqref{eq_min_problem}-\eqref{eq_constraint3} is given by 
\begin{align}
L(\bm{w}, \bm{z}, \bm{u}, \boldsymbol{\lambda})&:= - \sum_i w_i z_i + \sum_{k=1}^3 \lambda_k g_k(\bm{w}, \bm{u}) + \sum_{k=4}^6 \lambda_k g_k(\bm{z}, \bm{u}) \\
&=- \sum_i w_i z_i + \frac12 \sum_i w_i^2\phi_{\boldsymbol{\lambda}}(u_i) + \frac12\sum_i z_i^2\psi_{\boldsymbol{\lambda}}(u_i)-\sum_{k=1}^3 \lambda_k A_k-\sum_{k=1}^3 \lambda_{k+3}B_k,
\end{align}
where $\phi_{\boldsymbol{\lambda}}(u):= 2\sum_{k=1}^3 \lambda_k u^{k-1}$ and  $\psi_{\boldsymbol{\lambda}}(u):= 2\sum_{k=1}^3 \lambda_{k+3} u^{k-1}$.
Since $u_i\neq u_j$ for $i\neq j$, and $I\geq 3$, it follows that $\{\nabla g_k\}_{k\leq 6}$ are linearly independent.
Hence, if $\max_i|u^*_i| < R$, the global minimum points must satisfy
\begin{align}
\label{grad_1}
\frac{\partial{L}}{\partial{w_i}}&= -z^*_i+ w^*_i\phi_{\boldsymbol{\lambda}^*}(u^*_i)=0, \\[0.1cm]
\label{grad_2}
\frac{\partial{L}}{\partial{z_i}}&= -w^*_i+ z^*_i\psi_{\boldsymbol{\lambda}^*}(u^*_i)=0, \\[0.1cm]
\label{grad_3}
\frac{\partial{L}}{\partial{u_i}}&= \frac{1}{2}\Bigl({w^*_i}^2\phi'_{\boldsymbol{\lambda}^*}(u^*_i) +  {z^*_i}^2\psi'_{\boldsymbol{\lambda}^*}(u^*_i)\Bigr)=0,
\end{align}
where $'$ denotes the derivative with respect to $u$.
From $w^*_i > 0$ for $i\leq I$ and \eqref{grad_2}, it follows that $z^*_i \neq 0$, and from $w^*_i = 0$ for $i \geq I+1$ and \eqref{grad_1}, it follows that $z^*_i = 0$.
Thus, we obtain  
\begin{align}
\label{eq_pair}
(P^*, Q^*)\in \set{P}_{I,R}.
\end{align}
For $i\leq I$, from \eqref{grad_1} and \eqref{grad_2}, it follows that
\begin{align}
\label{eq_poly1}
\phi_{\boldsymbol{\lambda}^*}(u^*_i)\psi_{\boldsymbol{\lambda}^*}(u^*_i)-1=0.
\end{align}
By multiplying \eqref{grad_1} by $w^*_i $ and \eqref{grad_2} by $z^*_i$, and subtract each other, it follows that ${w^*_i}^2\phi_{\boldsymbol{\lambda}^*}(u^*_i)\ - {z^*_i}^2\psi_{\boldsymbol{\lambda}^*}(u^*_i) = 0$. By substituting \eqref{grad_3} into this relation, we have 
\begin{align}
\label{eq_poly2}
\phi_{\boldsymbol{\lambda}^*}(u^*_i)\psi'_{\boldsymbol{\lambda}^*}(u^*_i)+\phi'_{\boldsymbol{\lambda}^*}(u^*_i)\psi_{\boldsymbol{\lambda}^*}(u^*_i)=
\Bigl(\phi_{\boldsymbol{\lambda}^*}(u^*_i)\psi_{\boldsymbol{\lambda}^*}(u^*_i)-1\Bigr)'=0.
\end{align}
From \eqref{eq_poly1} and \eqref{eq_poly2}, the algebraic equation $\phi_{\boldsymbol{\lambda}^*}(u)\psi_{\boldsymbol{\lambda}^*}(u)-1=0$ has multiple roots at $\{u^*_i\}_{i\leq I}$, then the degree of this polynomial is greater than or equal to $2I$. 
If $\phi_{\boldsymbol{\lambda}^*}(u)\psi_{\boldsymbol{\lambda}^*}(u)-1$ is not identically zero, since $\phi_{\boldsymbol{\lambda}^*}(u)\psi_{\boldsymbol{\lambda}^*}(u)-1$ is a polynomial of degree at most $4$ in $u$, it must be $I\leq 2$.
If $\phi_{\boldsymbol{\lambda}^*}(u)\psi_{\boldsymbol{\lambda}^*}(u)-1$ is identically zero, from \eqref{grad_1} and $w^*_i=z^*_i=0$ for $i\geq I+1$, it follows that $\bm{w}^*=\bm{z}^*$.
It contradicts the assumption of  $m_P\neq m_Q$.
By combining $I\leq 2$ and \eqref{eq_pair}, if $\max_i|u^*_i| < R$, we obtain $(P^*,Q^*)\in\set{P}_2$.

We next consider the case of $\sigma_P > 0$ and $\sigma_Q=0$.
In this case, we can put $z_1=1$, $u_1=m_Q$, and $z_i=0$ for $i\geq 2$, then the Lagrangian is given by
\begin{align}
L(\bm{w}, \bm{u}, \boldsymbol{\lambda}):= -w_1 +\frac12 \sum_i w_i^2\phi_{\boldsymbol{\lambda}}(u_i) - \sum_{k=1}^3 \lambda_k A_k.
\end{align}
If $\max_i|u^*_i| < R$, the global minimum points must satisfy
\begin{align}
\label{grad_4}
\frac{\partial{L}}{\partial{w_1}}&= -1 + w^*_1\phi_{\boldsymbol{\lambda^*}}(m_Q)=0, \\[0.1cm]
\label{grad_5}
\frac{\partial{L}}{\partial{w_i}}&= w^*_i\phi_{\boldsymbol{\lambda^*}}(u^*_i)=0, \\[0.1cm]
\label{grad_6}
\frac{\partial{L}}{\partial{u_i}}&= \frac{1}{2}{w^*_i}^2\phi'_{\boldsymbol{\lambda^*}}(u^*_i)=0, \quad \mbox{for} \hspace*{0.15cm}  i\geq 2.
\end{align}
From \eqref{grad_4}, it follows that $w_1 \neq 0$, and $\phi_{\boldsymbol{\lambda^*}}(u)$ is not identically zero.
For $2\leq i \leq I$, we have $\phi_{\boldsymbol{\lambda^*}}(u^*_i)~=\phi'_{\boldsymbol{\lambda^*}}(u^*_i)=0$.
Hence, the algebraic equation $\phi_{\boldsymbol{\lambda}^*}(u)=0$ has multiple roots at $\{u^*_i\}_{2\leq i\leq I}$, then we have $I\leq 2$ from $1\leq \deg \phi_{\boldsymbol{\lambda}^*} \leq 2$.
The proof for the case of $\sigma_P = 0$ and $\sigma_Q > 0$ is the same.  By combining cases of $\sigma_P=\sigma_Q=0$, if $\max_i |u_i| < R$, we obtain $(P^*,Q^*)\in\set{P}_2$.
\end{proof}

\begin{proof}[Proof of Theorem~\ref{theorem_LB_Hellinger}]
We first prove \eqref{eq_LB_Hellinger} for pairs of finite discrete probability measures. \\
Let $H^*:= \inf_{(P,Q)\in\set{P}_n \cap \set{P}[m_P, \sigma_P, m_Q, \sigma_Q]} H^2(P,Q)$ and suppose $H^*<h^2(r,s)$.
By applying Lemma~\ref{lem_minimum} as $R\rightarrow\infty$, there exist sequences of vectors $\{\bm{u}_j\}$, and probability measures $\{P_j\}$ and $\{Q_j\}$, which are defined on $\{\bm{u}_j\}$, such that
\begin{align}
H^2(P_\infty,Q_\infty)=H^*,
\end{align}
where $Z_\infty$ denotes $\lim_{j\rightarrow \infty} Z_j$ for $Z=\{P,Q, u_i\}$. Without any loss of generality, one can assume that $|u_{i,\infty}| < \infty$ for $1\leq i\leq I$ and $|u_{i,\infty}| = \infty$ for $I+1\leq i\leq n$. Let $\sum_{i\geq I+1} p_{i,\infty}u^2_{i,\infty}=C^2$ and  $\sum_{i\geq I+1} q_{i,\infty}u^2_{i,\infty}=D^2$, where $p_{i,j}=P_j(u_{i,j})$ and $q_{i,j}=Q_j(u_{i,j})$. From the variance constraints, we have $C^2\leq m_P^2+\sigma_P^2$ and $D^2\leq m_Q^2+\sigma_Q^2$. Hence, $p_{i,\infty}=O(u_{i,\infty}^{-2})$ and $q_{i,\infty}=O(u_{i,\infty}^{-2})$ for $i \geq I+1$, then
\begin{align}
\label{eq_moment}
\sum_{i\geq I+1} p_{i,\infty}=\sum_{i\geq I+1} p_{i,\infty}u_{i,\infty}=\sum_{i\geq I+1} \sqrt{p_{i,\infty}q_{i,\infty}}=0.
\end{align}
Let $P'$ and $Q'$ be probability measures defined on $\{u_{1,\infty}, u_{2,\infty}, \cdots, u_{I,\infty}\}$, and let $P'(u_{i,\infty})=p_{i,\infty}$, $Q'(u_{i,\infty})=q_{i,\infty}$ for $1\leq i \leq I$.
From \eqref{eq_moment}, it follows that 
\begin{align}
(P',Q')&\in \set{P}_{I,R} \cap \set{P}[m_P, \sigma_P^2 -C^2, m_Q, \sigma_Q^2 -D^2], \\
H^*&=H^2(P', Q'),
\end{align}
where $R > \max_{i\leq I} |u_{i, \infty}|$. Since the variances of $P'$ and $Q'$ are non-negative, we have $0 \leq C^2\leq \sigma_P^2$, and $0 \leq D^2\leq \sigma_Q^2$. 
By considering the similar sequences under constraints $\sum_{i\geq I+1} p_{i,\infty}u^2_{i,\infty}=C^2$ and $\sum_{i\geq I+1} q_{i,\infty}u^2_{i,\infty}=D^2$, and using the definition of $H^*$, it follows that $H^*$ is the global minimum in $\set{P}_{I,R} \cap \set{P}[m_P, \sigma_P^2 -C^2, m_Q, \sigma_Q^2 -D^2]$. Hence, by applying Lemma~\ref{lem_minimum} for sufficiently large $R$, it follows that $(P',Q')\in \set{P}_2$, and
\begin{align}
H^*=H^2(P', Q')= h^2(r',s'),
\end{align}
where $r'$ and $s'$ satisfy the moment constraints of $(m_P,  \sigma_P^2 -C^2, m_Q, \sigma_Q^2 -D^2)$, and they are unique from Lemma~\ref{lem_p2}.
From Lemma \ref{lem_binary}, since $h^2(r,s)$ is monotonically decreasing with respect to $\sigma_P$ and $\sigma_Q$, we have $H^* = h^2(r',s') \geq h^2(r,s)$. This contradicts the assumption of $H^*<h^2(r,s)$, then we obtain $H^*=h^2(r,s)$.

Next, we prove \eqref{eq_LB_Hellinger} for pairs of probability measures in $\set{P}[m_P,\sigma_P, m_Q, \sigma_Q]$. For an arbitrary small $\epsilon$, there exists $R$ such that
\begin{align}
\label{approximate1}
|\int_{|x|>R} px^{k-1} \mathrm{d}\mu(x)|<\epsilon, \quad |\int_{|x|>R} qx^{k-1} \mathrm{d}\mu(x)|<\epsilon, \quad \mbox{for} \hspace*{0.15cm}  k=1,2,3.
\end{align}
Since $(\sqrt{p}-\sqrt{q})^2\leq p+q$, we have 
\begin{align}
\label{approximate2}
\frac12 \int_{|x|>R} (\sqrt{p}-\sqrt{q})^2 \mathrm{d}\mu(x) < \epsilon.
\end{align}
In the interval $[-R, R]$, one can approximate probability measures by finite discrete probability measures $(P_d, Q_d)\in\set{P}_{n, R}$ as follows.
\begin{align}
\label{approximate3}
&|\int_{|x|\leq R} px^{k-1} \mathrm{d}\mu(x)- \sum_i p_i u_i^{k-1}|<\epsilon, \\[0.1cm]
\label{approximate4}
&|\int_{|x|\leq R} qx^{k-1} \mathrm{d}\mu(x)- \sum_i q_i u_i^{k-1}|<\epsilon, \quad \mbox{for} \hspace*{0.15cm}  k=1,2,3, \\[0.1cm]
\label{approximate5}
&|\frac12 \int_{|x|\leq R}  (\sqrt{p}-\sqrt{q})^2 \mathrm{d}\mu(x)- H^2(P_d, Q_d)|<\epsilon.
\end{align}
From \eqref{approximate2} and \eqref{approximate5}, we have $H^2(P,Q)=H^2(P_d, Q_d)+O(\epsilon)$, and from \eqref{approximate1}, \eqref{approximate3}, and \eqref{approximate4}, differences of means and variances between $(P,Q)$ and $(P_d, Q_d)$ are $O(\epsilon)$. 
By applying $H^2(P_d, Q_d)\geq h^2(r_d, s_d)$, it follows that
\begin{align}
H^2(P,Q)=H^2(P_d, Q_d)+O(\epsilon)\geq h^2(r_d, s_d)+O(\epsilon)=h^2(r, s)+O(\epsilon),
\end{align}
where $(r_d, s_d)$ satisfy the momentum constraints of $(m_{P_d}, \sigma_{P_d}, m_{Q_d}, \sigma_{Q_d})$, and we use differentiability of $h^2(r,s)$ with respect to moments and variances.
Since $\epsilon$ is arbitrary small, we obtain $H^2(P,Q) \geq h^2(r,s)$.

We finally show Item(c). The proof is the same as Theorem 2 in~\cite{nishiyama2020relations}, so we outline of the proof.  We construct sequence of probability measures $\{(P_j, Q_j)\}$ with zero mean and respective variances $(\sigma_P^2, \sigma_Q^2)$ for which $H^2(P_j, Q_j)\rightarrow 0$ as $j\rightarrow \infty$
(without any loss of generality, one can assume that the equal means are equal to zero).
We start by assuming $\min\{\sigma_P^2, \sigma_Q^2\}\geq 1$.
Let 
\begin{align}
\mu_j:= \sqrt{1 + j(\sigma_Q^2-1)},
\end{align} 
and define a sequence of quaternary real-valued random variables with probability mass functions
\begin{align}
Q_j(a) := 
\begin{dcases}
\frac12-\frac{1}{2j}, & \quad a = \pm 1, \\
\frac{1}{2j},       & \quad a = \pm \mu_j.
\end{dcases}
\end{align}
It can be verified that, for all $j\in \naturals$, $Q_j$ has zero mean and variance $\sigma_Q^2$.

Furthermore, let
\begin{align}
P_j(a) := 
\begin{dcases}
\frac{1}{2}-\frac{\xi}{2j}, & \quad a = \pm 1, \\
\frac{\xi}{2j},       & \quad a = \pm \mu_j,
\end{dcases}
\end{align}
with
\begin{align}
\xi:= \frac{\sigma_P^2-1}{\sigma_Q^2-1}.
\end{align}
If $\xi > 1$, for $j=1, \cdots, \lceil \xi \rceil$, we choose $P_j$ arbitrary with mean $0$ and variance $\sigma_P^2$.
Then,
\begin{align}
H^2(P_j, Q_j)=h^2\Bigl(\frac{\xi}{j}, \frac{1}{j}\Bigr)\rightarrow 0.
\end{align}
Next, suppose $\min\{\sigma_P^2, \sigma_Q^2\}:= \sigma^2<1$, then construct $P'_j$ and $Q'_j$ as before with variances $\frac{2\sigma_P^2}{\sigma^2}>1$ and $\frac{2\sigma_Q^2}{\sigma^2}>1$, respectively. If $P_j$ and $Q_j$ denote the random variables $P'_j$ and $Q'_j$ scaled by a factor of $\frac{\sigma}{\sqrt{2}}$, then their variances are $\sigma_P^2, \sigma_Q^2$, respectively, and $H^2(P_j,Q_j)=H^2(P'_j,Q'_j)\rightarrow 0$ as we let $j\rightarrow \infty$.
\end{proof}

\begin{proof}[Proof of Proposition~\ref{proposition: hellinger_LB}]
We prove the first inequality because the proof for the second inequality is the same. From \eqref{eq-binary Hellinger}, we obtain
\begin{align}
\label{2020c1}
h^2(r,s) &= \frac{(r-s)^2}{2}\Bigl(\frac{1}{(\sqrt{r}+\sqrt{s})^2} + \frac{1}{(\sqrt{1-r}+\sqrt{1-s})^2}\Bigr) \\[0.1cm]
\label{2020c2}
& = \frac{(r-s)^2}{2}\Bigl(g\Bigl(\frac{s}{r}\Bigr)\frac{1}{r+s} + g\Bigl(\frac{1-s}{1-r}\Bigr)\frac{1}{2-r-s} \Bigr) \\[0.1cm]
\label{2020c3}
& \geq \beta_{\mathrm{min}}\frac{(r-s)^2}{4}\Bigl(\frac{1}{r+s} + \frac{1}{2-r-s} \Bigr).
\end{align}
From \eqref{r} and \eqref{s}, it follows that
\begin{align}
\label{2020c4}
r+s=1 + \frac{b}{2av}, \\[0.1cm]
\label{2020c5}
r-s=\frac{a}{2v}.
\end{align}
Substituting \eqref{2020c4} and \eqref{2020c5} into \eqref{2020c3}, it follows that
\begin{align}
h^2(r,s) &\geq \beta_{\mathrm{min}}\frac{a^2}{16v^2}\frac{2}{1-\frac{b^2}{4a^2v^2}} \\[0.1cm]
&=\beta_{\mathrm{min}}\frac{a^2}{2\Bigl(a^2+2(\sigma_P^2+\sigma_Q^2)\Bigr)}=\beta_{\mathrm{min}}l(m_P,\sigma_P, m_Q, \sigma_Q).
\end{align}
Hence, we obtain \eqref{eq_compare}.
\end{proof}

\section{Open problems} 

As shown in this note, the squared Hellinger distance, chi-squared divergence, and relative entropy satisfy the following condition.\\
\textbf{Lower bound condition} \\
{\em The binary divergence attains the lower bound with given means and variances.}\\

These divergences belong to $f$-divergence and the $\alpha$-divergence \cite{cichocki2010families} given by
\begin{align}
D_A^{(\alpha)}(P\|Q):= \Bigl(\frac{1}{\alpha(\alpha-1)}\int p^\alpha q^{1-\alpha} \mathrm{d}\mu -1\Bigr).
\end{align}
We conclude with open problems.
\begin{problem}
What are conditions for a function $f$ of the $f$-divergence to satisfy the ``Lower bound condition''?
\end{problem}
As a simpler problem than Problem 1, 
\begin{problem}
What are conditions of $\alpha$ to satisfy the ``Lower bound condition''?
\end{problem}

\bibliography{reference_Hellinger} 
\bibliographystyle{myplain}

\end{document}